\renewcommand\footnotetextcopyrightpermission[1]{} % removes footnote with conference information in first column
\newtheorem{theorem}{Theorem}[section]
\DeclarePairedDelimiter{\ceil}{\lceil}{\rceil}
\DeclarePairedDelimiter{\floor}{\lfloor}{\rfloor}
\definecolor{Blue}{RGB}{45,47,146}
\begin{document}
\title{Community Detection by Information Flow Simulation}

\author{Rajagopal Venkatesaramani and Yevgeniy Vorobeychik}
\affiliation{%
  \institution{Electrical Engineering \& Computer Science\\Vanderbilt University}
  \streetaddress{2201 West End Avenue}
  \city{Nashville} 
  \state{Tennessee} 
  \postcode{37235}
}
\email{{raj.venkat,yevgeniy.vorobeychik}@vanderbilt.edu}

%\author{Yevgeniy Vorobeychik}
%\affiliation{%
%  \institution{Vanderbilt University}
%  \streetaddress{2201 West End Avenue}
%  \city{Nashville} 
%  \state{Tennessee} 
%  \postcode{37235}
%}
%\email{yevgeniy.vorobeychik@vanderbilt.edu}

\begin{abstract}
	Community detection remains an important problem in data mining, owing to the lack of scalable algorithms that exploit all aspects of available data - namely the directionality of flow of information and the dynamics thereof. Most existing methods use measures of connectedness in the graphical structure. In this paper, we present a fast, scalable algorithm to detect communities in directed, weighted graph representations of social networks by simulating flow of information through them. By design, our algorithm naturally handles undirected or unweighted networks as well. Our algorithm runs in $\mathcal{O}(|E|)$ time, which is better than most existing work and uses $\mathcal{O}(|E|)$ space and hence scales easily to very large datasets. Finally, we show that our algorithm outperforms the state-of-the-art Markov Clustering Algorithm (MCL) \cite{van2001graph} in both accuracy and scalability on ground truth data (in a number of cases, we can find communities in graphs too large for MCL).
\end{abstract}

\keywords{Community Detection, Graph Clustering, Social Networks, Complex Networks}

\maketitle

\section{Introduction}
	Interest in the study of complex networks in computer science has been on the rise along with the prevalence of internet-based social networking platforms. Much information can be gained from the study of social networks, given that a wide range of information is captured in their structure. At its onset, social networks research dealt primarily with friendships - beginning from Stanley Milgram's paper on the small-world problem \cite{milgram1967}, which revealed the small world effect in modern social networks. As social media evolved to include multiple kinds of interaction, analysis of such networks became increasingly relevant to many stakeholders - including retailers, disaster management and mitigation forces, governments and advertizing agencies to name a few.
	
	In a network, local measures often provide information that is not evident from the network's global view. One local measure of a network's structure is the existence of communities. Although not always well defined, communities are generally accepted to be a subset of the network's nodes that have higher interactions between them as compared to the interactions between two randomly selected nodes from the network as a whole. Detection of such communities in a network has many applications - including but not limited to targeted and viral marketing, information outreach, and influence maximization. Detecting communities is hard and the reasons are manifold. The number of users in a typical social network today is very high, resulting in large datasets. Facebook, for example, has close to 1.9 billion users. Most existing algorithms to detect communities are not scalable to very large networks, owing to their time and space complexities.

	In this paper, we present an algorithm for community detection that is both fast and highly scalable - in terms of time and memory complexity. The approach also models a network using a fully dynamic data structure, instead of more conventional sparse-matrix representations, which contributes significantly to the improvements in performance, and at the same time allows extension of this work to dynamic networks. We find communities by identifying influential users in a network that act as points of origin for information. We then simulate the flow of this information over the network by treating edges as probabilistic paths of information transfer. We provide an upper-bound for our algorithm's expected running time, and illustrate the properties of social networks that we exploit to achieve this bound.
	
\section{Preliminaries}
	A network is represented as a directed, weighted graph, $G = (V, E)$, where $V$ is the vertex-set of $G$ and $E$ is the edge-set of $G$. A directed edge is an ordered pair of vertices, $(u,v)$, where $u, v \in V$, and in our context signifies that information can flow from vertex $u$ to vertex $v$. Each edge has a weight, $w_{uv}$ which is proportional to how probable flow of information is along that edge. We provide a formal interpretation of the same in the next section. 

	\subsection{Definitions and Notation}
	We rely on the following definitions throughout the paper:

		\subsubsection{Node Neighborhood}
		The neighborhood, $N(v)$ of a node $v \in V$ is the set of nodes to which there is an outgoing edge from $v$. \[N(v) = \{x \in V\ |\ (v,x) \in E\}\]

		\subsubsection{Unweighted Out-Degree}
		The unweighted out-degree, $\delta(v)$ of a vertex $v \in V$ is defined as the number of outgoing edges from the vertex. Alternatively, the unweighted out-degree of a node is the cardinality of its neighborhood. \[\delta(v) = |N(v)|\]

		\subsubsection{Unweighted In-Degree}
		The unweighted in-degree, $\delta_i(v)$ of a vertex $v \in V$ is defined as the number of edges in the graph terminating at $v$. \[\delta_i(v) = |\{e=(u,v)\ |\  u \in V, e \in E\}|.\]

		\subsubsection{Weighted Out-Degree}
		The weighted out-degree, $\delta^*(v)$ of a vertex $v \in V$ is defined as the sum of the weights of all outgoing edges from the vertex. \[\delta^*(v) = \sum_{x \in N(v)} w_{vx}\]

		\subsubsection{Graph Diameter}
		The diameter of a graph $G$ is the longest shortest path, i.e. the greatest geodesic distance between any two vertices in $G$.

	\subsection{Properties of Social Networks}
	The following properties of social networks aid in developing our algorithm for community detection. 
	
	\subsubsection{Communities}
	Social networks typically exhibit presence of communities, which are highly connected subsets of the vertex set of the graph. The number of edges within such communities is seen to be higher than those between different communities. 

	\subsubsection{The Small-World Effect}
	The small-world effect is the ability to connect any two vertices of a graph with a path of very short length. A graph, $G$ is said to be \textbf{strongly connected} if \[\forall u, v \in V, \exists \textrm{ a path from } u \textrm{ to } v.\] Most real-life networks contain a strongly connected component that spans most of the graph, often called a \textbf{giant component}. The presence of such a component leads to most vertices being linked by paths of very short length.
	
	\subsubsection{Random Walk}
	A random walk is a stochastic process in which a random walker - from an initial position - moves along a path on a mathematical space in a success of random steps, i.e. chooses one of many possible steps with some probability. In a graph, starting from a vertex, the walker moves to another vertex from the origin's neighborhood with some probability at each step.

\section{Model}
	Rather than using conventional sparse-matrix representations of graph adjacency, we rely on a dynamic data structure to represent the network. The input is a list of triples of the form $(u,v,w_{uv})$ where $u$ and $v$ represent the start and end points of an edge, respectively, and $w_{uv}$ represents its weight. We use a hash table whose keys are the starting vertices $u$. Alternatively, the keys are vertices such that $\delta(u) > 0$. The corresponding value for each key is a 2-tuple. Each element in the tuple, in turn, is a $k$-tuple implemented as a dynamic list, where $k=|N(u)|$. The first of the two $K$-tuples is a list of the vertices that each $u$ is connected to, or alternatively, the node neighborhood, $N(u)$. The second list in the tuple is that of the respective weights of the edges connecting $u$ to the $k$ nodes in its neighborhood. In our implementation, the edge weights are real numbers greater than or equal to $1$. Datasets with edge weights in the range $(0,1)$ should be converted to the appropriate form. The two lists thus, will always be of equal length, and are necessarily ordered. Let $H$ be the hash table over vertices in $G$ with non-zero out-degree. Then for $\forall\ u \in V, \delta(u) > 0$, \[H(u) = (N_u, W_u)\] \[\textrm{where } N_u = (v_1, v_2, \dots v_k)\ ;\;  v_i \in N(u)\] \[\textrm{and } W_u = (w_{uv_1}, w_{uv_2}, \dots, w_{uv_k}).\] Consider the graph in Figure~\ref{graph1}. 

	\begin{figure}[H]
		\vskip -1em
		\begin{tikzpicture}[scale=0.5]
			\filldraw[black] (0,2) circle (2pt) node[anchor=south]{1};
			\filldraw[black] (1.9, 0.6) circle (2pt) node[anchor=west]{2};
			\filldraw[black] (-1.9, 0.6) circle (2pt) node[anchor=east]{3};
			\filldraw[black] (1.17, -1.6) circle (2pt) node[anchor=north]{4};
			\filldraw[black] (-1.17, -1.6) circle (2pt) node[anchor=north]{5};
			\draw[->, black, thick] (0,2) -- (1.85, 0.6) node[midway, above] {\color{Blue}2.0};
			\draw[->, black, thick] (0,2) -- (-1.85, 0.6) node[midway, above] {\color{Blue}4.0};
			\draw[->, black, thick] (0,2) -- (-1.14, -1.5) node[midway, left] {\color{Blue}5.0};
			\draw[->, black, thick] (0,2) -- (1.14, -1.5) node[midway, right] {\color{Blue}4.0};
			\draw[->, black, thick] (-1.9, 0.6) -- (-1.25, -1.55) node[midway, left] {\color{Blue}3.0};
			\draw[->, black, thick] (1.17, -1.6) -- (1.9, 0.5) node[midway, right] {\color{Blue}2.0};
			\draw[->, black, thick] (1.17, -1.6) -- (-1.1, -1.6) node[midway, below] {\color{Blue}1.0};
		\end{tikzpicture}
		\setlength{\belowcaptionskip}{-1em}
		\setlength{\abovecaptionskip}{-2pt}
		\caption{An example graph}
		\label{graph1}
	\end{figure}
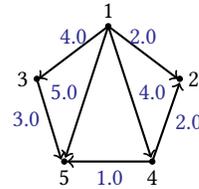

	The hash table representation of the graph is in Figure~\ref{hash1}. For representational clarity, we use square brackets to signify the 2-tuple corresponding to a key. Implementationally, $H[u][0]$ and $H[u][1]$ correspond to $N_u$ and $W_u$ respectively for a given $u$.
	\begin{figure}[H]
	\vskip -1em
		\begin{tabular}{ l | l }
		1 & [ (2, 4, 5, 3) ,  (2.0, 4.0, 5.0, 4.0) ] \\
		3 & [ (5) ,  (3.0) ] \\
		4 & [ (5) ,  (1.0) ] 
		\end{tabular}
		\setlength{\belowcaptionskip}{-1em}
		\caption{Hash Table representation of graph in Figure\ref{graph1}}
		\label{hash1}
	\end{figure}

	\subsection{Implementation Properties}
	This dynamic data structure makes this a fast, scalable implementation. The following properties make this a particularly powerful representation of a network.

	\subsubsection{Complexity}
	Retrieval of neighborhood and corresponding edge weight lists for a node, $u$, is $\mathcal{O}(1)$. Appending a node to any list is an $\mathcal{O}(1)$ operation. Accessing the weight for a certain edge, given start and end vertices $u$ and $v$, is $\mathcal{O}(k)$, where $k=|N(u)|$, and so is searching in $N(u)$. This is comparable to access times in matrix-implementations of graphs. However, the accesses in our algorithm for vertices and respective edge weights are most often in the sequence of their occurrence in the lists, reducing the number of complete list traversals required.

	For each hash-table entry, we need a starting node $u$, a set of end nodes $\{v_1, v_2, \dots, v_n\}$, and corresponding weights for all vertices in $N(u)$. Storing all end-vertices and their corresponding weights requires $2|E|$ space. We only use the vertices in $V(G)$ which have an unweighted out-degree greater than or equal to $1$ as keys. Therefore, \[|\{v \in V, \delta(v)>1\}| \leq {|E|}.\] In addition, during execution, we will store one more list which would contain at most all keys of $H$. Thus worst case total space complexity is \[\mathcal{O}(2|E| + 2|\{v \in V, \delta(v)>1\}|)\] \[ = \mathcal{O}(|E|)\] 

	\subsubsection{Dynamic Nature}
	In conventional matrix-representations of graph adjacency, insertion of a new node is typically only possible before compilation. If possible at runtime, this operation takes $\mathcal{O}(n)$ time where $n$ was the prior number of vertices. In contrast, our implementation can do this in $\mathcal{O}(\delta(v)+\delta_i(v))$ time and space for a new vertex $v$. Given that in most real-life networks, $\delta(v)$ and $\delta_i(v)$ are far less than $|E|$, this speedup is significant. Once a graph has been loaded into memory, our model allows runtime modification. This representation hence, allows extension to dynamic networks.

	\subsection{Reinterpreting Probability on Weighted Edges}
	Over the years, researchers have used random walks for graph clustering by constructing a probability distribution over edge weights at each vertex $u$ as \[P(u,v) = \frac{w_{uv}}{\delta^*(u)}\] where for unweighted graphs, all weights are $1$. This is convenient, because $\sum_{v \in N(u)} P(u,v) = 1\ \forall u \in V$. However, such an interpretation has a few major shortcomings, which we now address.

	\subsubsection{One Path Restriction}
	The classical interpretation is that a random walk chooses to continue on any one of the edges emerging from a vertex with probability \(P(u,v)\) as defined above. However, in reality, information does not flow only along one edge at a time - rather it spreads out on all fronts in parallel. An example is a user making a post, which reaches a number of their followers, in parallel and independently of each other.

	This independence of information flow along multiple fronts is in fact, the driving force behind our approach to community detection. We are no longer restricted to one random walk - rather, we simulate information reaching each of the node's neighbors with some probability in parallel. This probability does depend on edge weights, but as the spread from each edge is independent, the probabilities along all edges emerging from a vertex no longer need to add to $1$. We hence redefine the probability of information flow from vertex $u$ to vertex $v$, along an edge with weight $w_{uv}$ as 
	
	\begin{equation}
	P(u,v) = \left(\frac{w_{uv}}{\delta^*(u)}\right)^\beta; \;\ \  \beta \in (0,1).
	\end{equation}

	This skews individual probabilities closer to 1, in comparison to the classical view, which leads to a better representation of network dynamics. For our experiments, we use a value of $\beta = 1/4$, which captures flow dynamics well in our datasets, as per our interpretation. Thus, for our model, the probability that information flows from vertex $u$ to vertex $v$ is

	\begin{equation}\label{p_uv}
	P(u,v) = \frac{\sqrt[4]{w_{uv}}}{\sqrt[4]{\delta^*(u)}}
	\end{equation}

	A comparison with the classical interpretation follows, illustrating why this is better realization of the data at hand - both for our algorithm and otherwise.

	\subsubsection{Accounting for Disbalance}
	The classical distribution does not account for disbalance in the graph structure - that some parts of the network are better connected than others, and that the dynamics of information flow in these parts are quite different. Consider the following examples.

	\begin{figure}
		\includegraphics[scale=0.65]{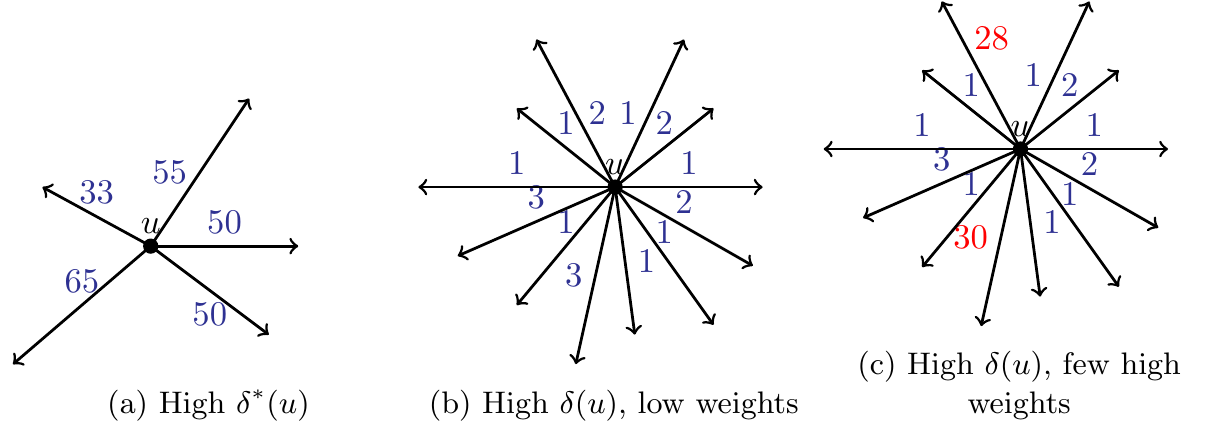}
		\caption{Nodes with varying local structure}
		\label{sub}
	\end{figure}

	\subsubsection*{\textbf{High individual edge weights}}
		Imagine a node where most outgoing edges have high weight with respect to the global average. Figure~\ref{sub}(a) shows one such example. Such a node represents a user from whom information passes to all their neighbors very frequently. By the conventional probability distribution over these edges, information flow along all the edges is roughly equally probable, but a random walk chooses one of the edges at a time. Contrast this with such a user in a real social network - all the vertices in the node neighborhood will likely receive this information in parallel, though not necessarily at the exact same time. Table ~\ref{prob_table1} shows the probabilities for each edge calculated the conventional way, versus by our definition in Equation ~\ref{p_uv}. 

	\begin{table}[htb]
		\vskip -0.5em
		\caption{Probabilities for node with high $\delta^*(u)$}
		\vskip -0.5em
		\begin{tabular}{|c|c|c|}
			\hline
			Edge weight & $w_{uv}/\delta^*(u)$ & $(w_{uv}/\delta^*(u))^{(1/4)}$ \\
			\hline
			45	&	0.169	&	0.641	\\
			50	&	0.188	&	0.659	\\
			55	&	0.207	&	0.675	\\
			65	&	0.245	&	0.703	\\
			\hline
		\end{tabular}
		\label{prob_table1}
	\end{table}

	It is immediately clear why Equation ~\ref{p_uv} is a better interpretation of weights. Our probabilities still reflect the difference in weights, but as each edge now acts as an independent pathway, these edges naturally allow data to flow through them with high probability. This is a major limitation of the typical interpretation - for a node with a large number of highly weighted edges, the individual probabilities over the edges are very low, and do not capture the dynamics of data flow.

	\subsubsection*{\textbf{High $\delta(u)$, low individual edge weights}}
		Now consider a node with a large number of outgoing edges, but with low weights each. Figure ~\ref{sub}(b) shows one such scenario. For such a node, as the number of edges increases, the probability values steeply drop. Using our formulation, probabilities are adjusted according to the local structure. For the node shown in figure ~\ref{sub}(b) the probability values for each edge weight are tabulated in Table ~\ref{prob_table2}.

	\begin{table}[htb]
		\setlength{\abovecaptionskip}{-0.5em}
		\caption{Probabilities for node with high $\delta(u)$, but low $\delta^*(u)$}
		\vskip 1em
		\begin{tabular}{|c|c|c|}
			\hline
			Edge weight & $w_{uv}/\delta^*(u)$ & $(w_{uv}/\delta^*(u))^{(1/4)}$ \\
			\hline
			1	&	0.052	&	0.478	\\
			2	&	0.105	&	0.569	\\
			3	&	0.157	&	0.630	\\
			\hline
		\end{tabular}
		\label{prob_table2}
	\end{table}

	Given that nodes with higher $\delta^*(u)$ will be much more active, one may argue that the probabilities calculated for this node are too high. However, they are implicitly conditional on the information creation rate. Once a post has been created (with a possibly lower rate), the data would reach its neighborhood with a higher probability than what the conventional interpretation suggests. We now consider another variation on this type of node - when there are a large number of edges with low weights, and few edges with high edge weights. Consider the node in Figure ~\ref{sub}(c) , where high edge weights are highlighted in red. The probability values for this node are presented in Table ~\ref{prob_table3}.

	\begin{table}[H]
		\caption{Probabilities for node with high $\delta(u)$, and some high weight edges}
		\begin{tabular}{|c|c|c|}
			\hline
			Edge weight & $w_{uv}/\delta^*(u)$ & $(w_{uv}/\delta^*(u))^{(1/4)}$ \\
			\hline
			1	&	0.013	&	0.343	\\
			2	&	0.026	&	0.408	\\
			3	&	0.039	&	0.451	\\
			28	&	0.388	&	0.789	\\
			30	&	0.416	&	0.803	\\
			\hline
		\end{tabular}
		\label{prob_table3}
	\end{table}

	It is evident that the edges with weights $28$ and $30$ will be high-flow paths and yet the conventional interpretation assigns them probabilities of less than half. Given that information does flow in parallel in real networks, both those nodes are almost equally likely to have data flow through them with high probability, which our method clearly reflects.

\section{Method}
Detecting communities in a graph, $G$ is defined as finding a set of $K$ subgraphs, $S=\{S_1, S_2, \dots, S_K\}$ of $G$, such that they are highly connected. Solving this problem optimally for any given metric of connectedness is $\mathcal{NP}$-hard, and hence solutions are approximations. The basic idea behind our approach is to find influential users in a network, and then simulate flow of information originating at those nodes, with probabilities calculated over edge weights. As this information spreads throughout the graph, we propagate cluster labels till either the whole graph is covered, or the algorithm reaches a stable state where there is no flow of data for some $\lambda$ iterations. Trivial assumptions allow us to use undirected and unweighted graphs.

Unlike all existing work on community detection, our approach does not guarantee a cluster label for each node. This is not a limitation, but a conscious design choice. In any real network, there are dormant users who do not create/propagate information regularly. In theory, for a connected graph, it is possible to use our approach to assign cluster labels to all vertices - by removing the stable-state termination condition, and let the algorithm converge probabilistically, but that then may not be an accurate representation of community structure. While evaluating cluster assignments, we treat these dormant nodes as singleton clusters. We also do not formulate our approach as an optimization problem over a single metric. Many metrics defined for connectedness have certain shortcomings, and by solely maximizing/minimizing those metrics, we tend to end up with cluster assignments that reflect these limitations. Instead, we simply let labels propagate through the network, and then evaluate the assignments thus found using different metrics to see how we fare.

\subsection{Alpha Detection}
We begin by identifying users that are highly likely to be sources of information in a social network. Such users typically have a large number of friends/followers, i.e. a large $\delta(u)$ value, as well as these connections tend to be strong, which translate to high edge weights, i.e. a large $\delta^*(u)$ value. Such users are referred to as \textit{alpha} users. Existence of such users has been established in existing literature \cite{barabasi1999emergence, muchnik2013origins, albert2002statistical, lattanzi2015power}, and is proven in Lemma ~\ref{min_max_degree}. To find these alpha users, we follow the steps in Algorithm ~\ref{alpha_detect}.

\begin{algorithm}[ht]
\SetKwInOut{Input}{Input}\SetKwInOut{Output}{Output}
\DontPrintSemicolon
\SetKwFunction{Sort}{Sort}
\SetKwFunction{Truncate}{Truncate}
	\Input{$G=(V,E)$\\$k$, Percentage of vertices to consider for alphas}
	\Output{Set $X$ of alpha vertices}
	\BlankLine
	List $NumRank \longleftarrow V$\;
	List $DegRank \longleftarrow V$\;
	\Sort{$NumRank$} by $\delta(u)$\;
	\Sort{$DegRank$} by $\delta^*(u)$\;
	\Truncate{$NumRank$} to retain top $k \%$ elements\;
	\Truncate{$DegRank$} to retain top $k \%$ elements\;
	$X \longleftarrow NumRank\ \cap\ DegRank$
\caption{Alpha Detection}
\label{alpha_detect}
\end{algorithm}

We have now found a subset of users in the network who have both high $\delta(u)$, as well as high $\delta^*(u)$. Note that the value of $k$ in the algorithm needs to be decided by the user. Finding the optimal number of clusters in a network is a separate, hard problem in itself, that we do not attempt to solve in this paper. Rather, we make informed guesses for values of $k$, depending on the dataset at hand. 

\section{Algorithm}
\subsection{Approach} The intuition behind our approach is that communities are built around influential users in a social network. Having found such alphas in the previous steps, we now imagine these vertices to be points of information origin. We then simulate how this data propagates through the network, with the idea that for a vertex $u$, if $\alpha_i$ is the first alpha vertex $u$ receives data from, $u$ belongs to the $i^{th}$ community. Therefore, we end up with $|X|$ communities. If for a vertex $x \in V$, there is no path from any alpha node to $x$, then $x$ will not be assigned a community.

The outline of the algorithm is as follows. We assign a cluster label to each alpha. Then in parallel for each alpha, we spread the label to each vertex its neighborhood, with a probability as calculated in Equation ~\ref{p_uv}. In the next step, for each node visited in the previous iteration, spread the acquired label to its neighborhood. We consider a vertex visited if it has been assigned a label. For each iteration, the edges along which data did not previously flow continue to remain active paths - i.e. information may flow through them in a later iteration. Nodes with neighborhoods that are not completely visited, hence, remain active. This process continues until either labels have been assigned to all nodes in the graph, or until a stable state is reached. A value of $\lambda = 3$ for steady-state termination was used for all experiments presented in this paper. 

At each iteration, we divide the active paths into subsets for parallelization. The number of these subsets depends on hardware used, and should be chosen keeping in mind that too many subsets may lead to increased thread waiting times. Algorithm ~\ref{CommDet++} explains the implementation of our approach. 

\begin{algorithm}[htb]
\SetKwInOut{Input}{Input}\SetKwInOut{Output}{Output}\SetKwInOut{Initialize}{Initialize}
\DontPrintSemicolon
\SetKwFunction{Append}{Append}
\SetKwFunction{Delete}{Delete}
\SetKwFor{ForAll}{for all}{do}{end}
	\Input{$G=(V,E)$\\$X$, set of alphas\\$\lambda$, max iterations with no data flow}
	\Output{hash table $Comm$ of cluster assignments}
	\Initialize{List $Origin \longleftarrow X$ \\List $Visited \longleftarrow \emptyset$ \\Hash Table $Comm \longleftarrow (v,0)\ \forall\ v\ \in\ V$ \\$add = 0\hskip 3cm$  //Counter Variable\\}
	\BlankLine
	\ForAll {$u\ \in\ Origin$} {
		$Comm[u] \longleftarrow u$\;
	}
	\While{$add \leq \lambda$} {
		$add \longleftarrow add + 1$\;
		\ForAll {$u\ \in\ Origin$ \textbf{in parallel}} {
			\If {$N(u) \neq \emptyset$} {
				\ForAll {$v\ \in\ N(u)$} {
					\If {$Comm[v]==0$, i.e. $v$ is not visited}{
						Propagate label $u$ to $v$ with probability \[P(u,v) = \frac{\sqrt[4]{w_{uv}}}{\sqrt[4]{\delta^*(u)}}\]\;
						\If {Label Propagated} {
							\Append{$Origin$, $v$}\;
							$Comm[v] \longleftarrow u$\;
							$add=0$\;
						}
					}
				}
				\If {$Comm[x] \neq 0\ \forall\ x\ \in\ N(u)$}{
					\Delete{$Origin$, $u$}\;
					\Append{$Visited$, $u$}\;
				}
			}
			\Else {
				\Delete{$Origin$, $u$}\;
				\Append{$Visited$, $u$}\;
			}
		}
	}
\caption{Label Propagation}
\label{CommDet++}
\end{algorithm}

\subsection{Running Time Upper Bound}
\begin{lemma}
\label{E_runtime}
For $n$ independent Bernoulli events with probabilities $p_1, p_2, \dots, p_n$ tried in parallel, the expected number of trials required to reach success on all $n$ events at least once is $\mathbb{E}=\frac{1}{\min p_i, i \in [1,n], i \in \mathbb{Z}}$.
\end{lemma}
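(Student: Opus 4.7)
The plan is to identify the overall stopping time $T$ as the maximum of $n$ independent geometric waiting times and evaluate $\mathbb{E}[T]$ exactly by exploiting independence. For each $i$, let $T_i$ be the trial at which event $i$ first succeeds; each $T_i$ is geometric with parameter $p_i$ and mean $1/p_i$, and $T = \max_i T_i$. Let $k$ denote any index attaining $p_k = p_{\min}$.

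I would first establish the direction $\mathbb{E}[T] \geq 1/p_{\min}$. Since every one of the $n$ events must succeed at least once for the process to complete, in particular the slowest event $k$ must succeed, so $T \geq T_k$ pointwise. Taking expectations yields $\mathbb{E}[T] \geq \mathbb{E}[T_k] = 1/p_{\min}$. This uses only the definition of $T$ as a maximum and the geometric mean formula.

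For the matching upper bound, my plan is to use the tail-sum identity $\mathbb{E}[T] = \sum_{t \geq 0} \Pr[T > t]$ together with the product expression $\Pr[T \leq t] = \prod_{i=1}^n \bigl(1 - (1-p_i)^t\bigr)$ coming from independence. I would expand $\Pr[T > t] = 1 - \prod_i(1-(1-p_i)^t)$ by inclusion--exclusion and then interchange the resulting finite sum with the sum over $t$. The dominant contribution is $\sum_{t \geq 0} (1-p_{\min})^t = 1/p_{\min}$, and the key claim to prove is that the remaining cross-product contributions $\sum_t \prod_{j \in S}(1-p_j)^t = 1/\bigl(1 - \prod_{j \in S}(1-p_j)\bigr)$ for subsets $S$ with $|S| \geq 2$, taken with their inclusion--exclusion signs, cancel down to zero so that only the slowest event's geometric series survives. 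Combined with the lower bound this yields $\mathbb{E}[T] = 1/p_{\min}$.

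The hardest part will be justifying that exact cancellation. Each cross-term is individually a strictly positive rational of the $p_i$'s, so the argument cannot proceed term-by-term but must exploit the full alternating structure of inclusion--exclusion simultaneously with the geometric summation. The naive bound $T \leq \sum_i T_i$ only gives $\mathbb{E}[T] \leq \sum_i 1/p_i$, which is far too weak; and standard second-moment-style bounds on $\mathbb{E}[\max_i T_i]$ likewise fall short of the claimed equality. As a backup I would attempt a coupling: construct a single geometric process of parameter $p_{\min}$ whose stopping time has the same distribution as $\max_i T_i$, reducing the question to verifying a distributional identity. Either route ultimately hinges on a combinatorial identity about alternating sums of $1/(1 - \prod_{j \in S}(1-p_j))$ over $S \subseteq \{1,\dots,n\}$, which I would try to prove by induction on $n$, peeling off the slowest index at each step and using the memoryless property of the geometric law to collapse the induction.
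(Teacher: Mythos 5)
Your lower-bound half is sound: $T=\max_i T_i \geq T_k$ pointwise for an index $k$ attaining $p_{\min}$, so $\mathbb{E}[T]\geq 1/p_{\min}$. The gap is in the upper bound: the exact cancellation your plan hinges on does not exist, because the lemma is false as an equality. Take $n=2$ and $p_1=p_2=1/2$. Your own inclusion--exclusion expansion gives
\[
\mathbb{E}[\max(T_1,T_2)] \;=\; \frac{1}{p_1}+\frac{1}{p_2}-\frac{1}{1-(1-p_1)(1-p_2)} \;=\; 2+2-\frac{4}{3} \;=\; \frac{8}{3} \;>\; 2=\frac{1}{p_{\min}}.
\]
In general the singleton terms alone already contribute $\sum_i 1/p_i$, and the alternating higher-order terms never collapse this to the single term $1/p_{\min}$; indeed your lower-bound argument, applied to every index rather than just $k$, shows $\mathbb{E}[\max_i T_i]\geq \max_i \mathbb{E}[T_i]$ with strict inequality whenever at least two of the $T_i$ are nondegenerate, which rules out the claimed identity outright. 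The coupling backup fails for the same reason: $\max_i T_i$ is stochastically strictly larger than a single geometric variable of parameter $p_{\min}$, so no distributional identity is available. The strongest true statements in this direction are $\mathbb{E}[T]\leq \sum_i 1/p_i$ or the sharper $\mathbb{E}[T]=\mathcal{O}\bigl((\log n)/p_{\min}\bigr)$, both of which carry an extra factor that the paper's running-time theorem would then have to absorb.

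For comparison, the paper's proof derives $\mathbb{E}[X]=1/p$ for a single geometric variable and then simply asserts that the expected time to first success on all events is ``at most'' $\mathbb{E}(X_i)$ for the slowest event $i$ --- that is, it silently replaces the expectation of the maximum by the maximum of the expectations, which is the inequality in the wrong direction. So the step you correctly flagged as ``the hardest part'' is exactly the step the paper never proves and cannot prove; your proposal is more honest about where the difficulty lies, but neither route can close it, because only the direction $\mathbb{E}[T]\geq 1/p_{\min}$ is true.
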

\begin{proof}
Consider a Bernoulli event with probability $p$. Let $X$ be a random variable denoting the number of independent trials to success. Then \[\mathbb{E}(X) = p + 2(1-p)p + 3(1-p)^2p + \dots\]
\[(1-p)\mathbb{E}(X) = p(1-p) + 2p(1-p)^2 + \dots\]
\[\mathbb{E}(X) - (1-p)\mathbb{E}(X) = p + p(1-p) + p(1-p)^2 + \dots\]
\[p\mathbb{E}(X) = p + p(1-p) + p(1-p)^2 + \dots\]
\[\mathbb{E}(X) = 1 + (1-p) + p(1-p)^2 + \dots\] This is a Geometric Series with the common ratio $(1-p)$, and thus, \[\mathbb{E}(X)=\frac{1}{1-(1-p)} = \frac{1}{p}\]
Now consider $n$ such Bernoulli events with probabilities $p_1, p_2, \dots, p_n$ tried in parallel. Then, for the event $i$ with the lowest probability, $p_i, \mathbb{E}(X_i)\geq\mathbb{E}(X_k)\ \forall\ k \in [1,n]$, and the expected time to first success on all events is at most $\mathbb{E}(X_i) = \frac{1}{\min p_i, i \in [1,n], i \in \mathbb{Z}}$.
\end{proof}

\begin{lemma}
\label{diameter}
In a connected graph, $G=(V,E)$ with diameter $d$, starting from any vertex $u \in V$, the entire graph can be visited in $d$ successive neighborhood traversals.
\end{lemma}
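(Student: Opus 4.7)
The plan is to argue by induction on the number of successive neighborhood traversals, showing that after $k$ such traversals from the starting vertex $u$, every vertex whose geodesic distance from $u$ is at most $k$ has been visited. Then the conclusion follows because the diameter $d$ is an upper bound on the eccentricity of $u$ (since $d$ is defined as the greatest geodesic distance between any two vertices in $G$), so every vertex lies within distance $d$ of $u$ and is thus visited after $d$ traversals.

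First I would set up notation: let $D_k = \{v \in V \mid \mathrm{dist}(u,v) \leq k\}$ be the set of vertices within distance $k$ of $u$, and let $V_k$ denote the set of vertices visited after $k$ successive neighborhood traversals starting from $u$, where $V_0 = \{u\}$ and $V_k = V_{k-1} \cup \bigcup_{x \in V_{k-1}} N(x)$. The inductive claim is $V_k \supseteq D_k$ for all $k \geq 0$. The base case $k=0$ holds because $D_0 = \{u\} = V_0$.

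For the inductive step, assume $V_{k-1} \supseteq D_{k-1}$ and take any $v \in D_k$. If $\mathrm{dist}(u,v) \leq k-1$, then $v \in D_{k-1} \subseteq V_{k-1} \subseteq V_k$. Otherwise $\mathrm{dist}(u,v) = k$, so there is a shortest path $u = x_0, x_1, \dots, x_{k-1}, x_k = v$, and $x_{k-1}$ lies at distance $k-1$ from $u$, hence $x_{k-1} \in V_{k-1}$ by the inductive hypothesis. Since $v \in N(x_{k-1})$, the $k$-th traversal adds $v$ to $V_k$. Finally, since $G$ is connected with diameter $d$, we have $\mathrm{dist}(u,v) \leq d$ for every $v \in V$, so $D_d = V$, and thus $V_d = V$.

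The main obstacle is essentially a notational one rather than a mathematical one: the lemma is stated for directed graphs (the ambient setting of the paper), but ``diameter'' and ``connected'' were defined above in an undirected flavor, and $N(v)$ is the out-neighborhood. I would note explicitly that the argument above uses only the existence of directed paths of length at most $d$ from $u$ to every other vertex, which is exactly what strong connectedness together with the diameter bound provides; if instead the graph is only weakly connected, the statement must be interpreted with $u$ ranging over vertices that can reach all others, or the traversal must be taken along the underlying undirected edges. Apart from this clarification, the proof is a standard BFS-layer argument and no additional machinery is required.
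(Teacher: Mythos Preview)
Your proof is correct and follows essentially the same approach as the paper: both argue via distance layers, observing that after $i$ neighborhood traversals one has reached every vertex at geodesic distance at most $i$ from $u$, so $d$ traversals suffice. Your version is simply a more careful, explicitly inductive write-up of the paper's terse argument, and your caveat about directedness is a reasonable clarification that the paper does not address.
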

\begin{proof}
As $G$ is connected, $\exists\ \textrm{ a path from } u \textrm{ to } v\ \forall\ u,v\ \in V$. The $i^{th}$ neighborhood of a vertex, $u$ is \[\{v \in V\ |\ \textrm{ length of shortest path from } u \textrm{ to } v \textrm{ is } i\}\] Then $\max i = d$, and the entire graph is visited.
\end{proof}

\begin{lemma}
\label{min_max_degree}
Let $G=(V, E)$ be a graph with $n$ vertices and a diameter $d$, and minimum possible value of maximum vertex degree.
\begin{enumerate}
	\item The optimal maximum degree is $\ceil*{n^{\frac{1}{\floor*{d/2}}}}+1$.
	\item $G$ has at least $n - \left((k-1)^{\floor*{\frac{d}{2}}} - 1\right) \times k$ vertices of degree at least $k$, where $k = \ceil*{n^{\frac{1}{\floor*{d/2}}}}+1$.
\end{enumerate}
\end{lemma}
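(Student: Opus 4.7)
The plan is to deduce both parts from a single Moore-type ball-counting lemma. If $G$ has maximum degree $k$, then a BFS tree rooted at any vertex $v$ has at most $k$ children at depth $1$, and at most $k-1$ new children per parent at each subsequent level, so the set of vertices within distance $r$ of $v$ has size at most $1 + k + k(k-1) + k(k-1)^{2} + \cdots + k(k-1)^{r-1}$. This single counting inequality is what both claims ultimately reduce to.

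For Part~(1), I would first pick a ``center'' $c$ of $G$---a vertex of minimum eccentricity---and argue that in an extremal graph (one that minimizes the maximum degree for the given $n$ and $d$) every vertex lies within $r = \lfloor d/2 \rfloor$ steps of $c$, so that the ball bound above already covers all of $V$. Plugging $r = \lfloor d/2 \rfloor$ in yields $n \leq 1 + k\bigl((k-1)^{\lfloor d/2\rfloor} - 1\bigr)/(k-2)$, and the smallest integer $k$ satisfying this is exactly $\lceil n^{1/\lfloor d/2\rfloor}\rceil + 1$, where the ``$+1$'' absorbs the gap between $k$ and the base $k-1$ of the geometric sum.

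For Part~(2), I would localize the same counting argument. Around any vertex $u$ of degree strictly less than $k$, a BFS from $u$ out to radius $\lfloor d/2\rfloor$ branches by factor at most $k-1$ per level, so it touches at most $\bigl((k-1)^{\lfloor d/2\rfloor}-1\bigr)\cdot k$ vertices in total (the branching factor times the effective ball depth, giving the stated closed form). Taking a union bound over such low-degree vertices and complementing in $V$ then produces at least $n - \bigl((k-1)^{\lfloor d/2\rfloor}-1\bigr)\cdot k$ vertices of degree at least $k$, i.e., the alpha vertices.

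The sharpest obstacle is justifying the exponent $\lfloor d/2 \rfloor$ rather than the more obvious $d$ that a naive ball argument around an arbitrary vertex would yield. For a general graph the radius can be as large as $\lceil d/2 \rceil$, so reducing to a $\lfloor d/2\rfloor$-radius ball requires either a careful appeal to extremality (forcing the extremal graph to be ``centered,'' with radius exactly $\lfloor d/2 \rfloor$) or a two-ball overlap trick using the endpoints of a diametral path. Parity of $d$ is where I expect the proof to need the most care, and I would handle the odd-$d$ and even-$d$ cases separately at the very end, after the main ball bound has been set up.
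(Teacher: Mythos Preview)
Your Part~(1) is essentially the paper's argument in different clothing. The paper builds the extremal graph explicitly as $k$ balanced $(k-1)$-ary trees of depth $\lfloor d/2\rfloor-1$ hung off a common centre, then counts vertices to get
\[
k\cdot\frac{(k-1)^{\lfloor d/2\rfloor}-1}{k-2}\ \ge\ n-1,
\]
which is exactly your Moore ball bound centred at $c$. Both then drop the $k/(k-2)$ factor and solve. Your flagged worry about justifying radius $\lfloor d/2\rfloor$ rather than $d$ is the right worry; the paper sidesteps it by working inside the explicit construction (where the centre manifestly has eccentricity $\lfloor d/2\rfloor$) rather than by a general radius argument, so if you want a clean non-constructive proof you will indeed have to do the two-ball or extremality step you mention.

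Your Part~(2), however, does not go through as written. Bounding the size of the radius-$\lfloor d/2\rfloor$ ball around a single low-degree vertex $u$ tells you how many vertices are \emph{near $u$}; it says nothing about how many low-degree vertices the graph contains. The ``union bound over such low-degree vertices'' would multiply your ball size by the number of low-degree vertices, which is the unknown you are trying to control, and ``complementing in $V$'' then gives you vertices far from every low-degree $u$, not vertices of degree at least $k$. There is no step here that converts a local ball estimate into a global count of high-degree vertices.

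The paper's Part~(2) avoids this entirely by staying inside the explicit extremal tree from Part~(1): in that graph the vertices of degree less than $k$ are exactly the leaves, the leaf count is at most $\bigl((k-1)^{\lfloor d/2\rfloor}-1\bigr)\cdot k$, and every non-leaf has degree $k$. Removing leaves to reach $n$ vertices can only drop parents below degree $k$ one at a time, giving the stated lower bound on the number of degree-$\ge k$ vertices. If you want to salvage your approach, you would need to argue directly that the extremal $G$ is (essentially) this tree and then count its leaves, rather than trying to extract the count from a ball bound.
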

\begin{proof}
	\textbf{1.} We construct a graph such that it minimizes maximum vertex degree subject to the constraints $|V|=n$ and $diam(G) = d$. We can use at most $d+1$ vertices to form a maximal path of length $d$, with the maximum vertex degree $2$. At this point, all vertices except the ones at the end have degree $2$, and we cannot add a vertex to the ends of the path, as we are bound by $diam(G) = d$. If we now allow maximum degree to increase by $1$, we can add a chain of $t$ vertices to each internal vertex, where $t$ is the shortest distance from the internal vertex to the nearest end-vertex of the original $d$-path. The end of each newly added chain is now at a distance $d$ from all end vertices. Then to each internal node in the newly added chains, we can recursively add chains so that the maximum distance is preserved, till no new vertices can be added to the graph.

	For a graph with given diameter $d$ and a maximum degree $k$, there is an upper bound to the number of vertices one can add. Equivalently, we can have at most $k$ trees, each with a branching factor of $k-1$, and the $k$ roots connected to a common vertex. This is justified as follows - $k$ roots are connected to a common vertex, which now has the maximum degree. Each root can then have a maximum branching factor of $k-1$, as the maximum degree of $G$ is $k$, and this constraint recursively applies to all children. The maximum depth of each tree of the $k$ trees can not exceed $\floor*{\frac{d}{2} - 1}$, due to the diameter constraint. The floor function accounts for odd diameters.

	To solve for the minimum required maximum degree, we need the total possible number of nodes in such a graph to be greater than or equal to the number of nodes we want to construct a graph with. For a given $k$ and $d$, we know that we can construct at most $k$ trees of height $\frac{d}{2} - 1$ and branching factor $k-1$. Thus, maximum number of nodes possible for given $k$ and $d$ is given by the inequality
	\[\textrm{Max. No. of trees} * \textrm{Max. nodes in each tree} \geq n-1 \]
	\begin{equation}\label{n_trees}
		k * \frac{(k-1)^{\floor*{(d/2)-1}+1} - 1}{k-2} \geq n-1
	\end{equation}
	For small values of $k$, solving this inequality is easy, and provides the optimum $k$. For large values of $k$, $k/(k-2)$ converges to $1$. Thus we use the approximation
	\[ (k-1)^{\floor*{(d/2)}} - 1 \geq n-1  \]
	\[ (k-1)^{\floor*{(d/2)}} \geq n  \]
	\[ k-1 \geq n^{\frac{1}{\floor*{d/2}}} \]
	\[ k = \ceil*{n^{\frac{1}{\floor*{d/2}}}} + 1 \]

	For small $k$, where $k/(k-2)$ is not close to $1$, this approximation leads to an error of $1$. Solving Equation ~\ref{n_trees} directly, however, leads to the correct result. The node to which all roots of the balanced trees are connected has degree $k$, and thus $G$ has at least one node with degree $k$. Removing an edge, and linking the node to any other vertex will either violate $diam(G)=d$ or increase maximum degree to $k+1$.

	\textbf{2.} When maximum degree is minimized, for a given $k$ and $d$, the maximum number of leaves possible are 
	\[ \left((k-1)^{\floor*{\frac{d}{2}}} - 1\right) \times k \]
	If the number of vertices available are less than the maximum possible in the construction, to keep $G$ connected, we can only remove a leaf for each, each of which reduces the number of vertices with degree at least $k$ by $1$. Thus, $G$ has at least \[n - \left((k-1)^{\floor*{\frac{d}{2}}} - 1\right) \times k\] vertices with degree at least $k$.
\end{proof}

\begin{lemma}
\label{O_E}
	For a social network, $|V| log |V| = \mathcal{O}(|E|)$.
\end{lemma}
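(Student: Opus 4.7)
The plan is to combine the small-world property highlighted in Section 2.2.2 with the structural bounds in Lemma~\ref{min_max_degree}. The idea is: social networks have a giant strongly connected component of diameter $d$ much smaller than $|V|$; Lemma~\ref{min_max_degree}(1) then forces the minimum possible maximum degree $k$ to be large, and Lemma~\ref{min_max_degree}(2) forces most vertices in that component to actually attain that degree; summing degrees via the handshake lemma yields a lower bound on $|E|$.

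First, I would formalize the small-world assumption so that the diameter $d$ of the giant component is small enough that $k = \lceil |V|^{1/\lfloor d/2\rfloor} \rceil + 1$ grows at least like $\log |V|$. A natural choice is $d = \mathcal{O}(\log |V| / \log \log |V|)$, for which direct substitution gives $k = \Omega(\log |V|)$. This is consistent with the empirically observed diameters of large online networks, so it is reasonable to adopt it as the working definition of ``social network'' for the purposes of this bound.

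Second, I would apply part~(2) of Lemma~\ref{min_max_degree} to conclude that at least $|V| - ((k-1)^{\lfloor d/2\rfloor} - 1)\,k$ vertices have degree at least $k$. Under the additional (mild, socially realistic) assumption that the giant component is a constant factor denser than the extremal tree-of-trees construction underlying the lemma, this reduces to a linear-in-$|V|$ number of high-degree vertices. The handshake lemma then yields
\[
|E| \;\ge\; \tfrac{1}{2}\cdot \Omega(|V|) \cdot \Omega(\log|V|) \;=\; \Omega(|V|\log|V|),
\]
which rearranges to the claim $|V|\log|V| = \mathcal{O}(|E|)$.

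The main obstacle will be the second step: the raw expression $((k-1)^{\lfloor d/2\rfloor} - 1)\,k$ is uncomfortably close to $|V|\cdot k$ precisely when $(k-1)^{\lfloor d/2\rfloor} \approx |V|$, which can make the naked bound vacuous. I would address this by restricting to the regime where $d$ is strictly below the critical value for which the extremal construction is tight, or by invoking the density of the giant component to provide the needed slack. Verifying that one of these two refinements applies to what the paper considers a ``social network'' is where the argument becomes delicate; everything else is routine substitution and the handshake lemma.
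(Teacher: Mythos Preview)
Your approach is structurally different from the paper's and, more importantly, has a gap you have correctly identified but not actually closed. The paper does not use Lemma~\ref{min_max_degree} at all here; instead it argues via clustering coefficients: it takes the Barab\'asi--Albert estimate $CC(G)\approx (\ln n)^2/n$ as a lower bound for social networks, equates global and average local clustering, and from the definition of local clustering derives that the average number of edges incident to a neighborhood satisfies roughly $|N(u)|(\ln|N(u)|)^2/2$, whence $|E|\ge c\,|V|(\ln|V|)^2$. That argument is itself heuristic, but it is the route the paper takes.

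The gap in your plan is the second step. Lemma~\ref{min_max_degree} is stated for the \emph{extremal} graph minimizing maximum degree subject to $|V|=n$ and $\mathrm{diam}(G)=d$; part~(2) counts high-degree vertices in that specific tree-of-trees construction, not in an arbitrary graph of diameter $d$. So you cannot invoke it for a generic social network. Worse, even in the extremal graph the bound is essentially vacuous: by construction $(k-1)^{\lfloor d/2\rfloor}\approx n$, so $n-((k-1)^{\lfloor d/2\rfloor}-1)k$ is negative for the very $k$ produced by part~(1). Your proposed fixes do not rescue this: assuming the giant component is ``a constant factor denser'' than the extremal construction only gives $|E|=\Theta(|V|)$, since that construction is a tree with $n-1$ edges; and assuming $d$ is strictly below the critical value amounts to choosing $d$ so that $n^{1/\lfloor d/2\rfloor}\gg \log n$, which already forces average degree $\gg\log n$ and thus assumes the conclusion. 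A star graph ($d=2$, $|E|=|V|-1$) shows that small diameter alone cannot give $|V|\log|V|=\mathcal{O}(|E|)$; some density input beyond diameter is required, and the paper supplies it through the clustering-coefficient assumption rather than through Lemma~\ref{min_max_degree}.
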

\begin{proof}
	The clustering coefficient of a graph is given as the ratio of the number of triangles to the number of 2-paths in the graph, where a triangle is a fully connected subgraph on 3 vertices. The Barab\'asi-Albert (BA) model produces random graphs that follow a power-law degree distribution. Social networks have been shown to follow such power-law distributions \cite{barabasi1999emergence, muchnik2013origins, albert2002statistical, lattanzi2015power}, in which the probability of observing a vertex with degree $i$ is proportional to $i^{-\alpha}$ for some $i>i_{min}$. Though real-world networks show much higher clustering than those generated by the BA model, these suffice to provide a lower bound on the number of edges in $G$. The BA model yields a clustering coefficient
	\[CC(G) = \frac{(\ln n)^2}{n}\ \ \textrm{ where } n =|V|.\]
	The network average of local clustering coefficient is a reasonable approximation to the global clustering coefficient ~\cite{watts1998collective}. Then at each vertex $u$, local clustering coefficient $CC(u) \approx CC(G)$. In addition, the local clustering coefficient at $u$ is defined as the ratio of the number of edges in $N(u)$ to the number of possible edges with $|N(u)|$ vertices. Equating the two,
	\[ \frac{2|\{(u,v)\in E\ |\ v \in N(u)\}|}{|N(u)|*(|N(u)-1|)} = \frac{(\ln |N(u)|)^2}{|N(u)|} \]
	\begin{equation}\label{cc}
	|\{(u,v)\in E\ |\ v \in N(u)\}| = \frac{|N(u)|(\ln |N(u))^2}{2}
	\end{equation}
	The graph we previously constructed has a clustering coefficient of $0$. We take this to be the worst case for a social network. For such a graph to have a global clustering coefficient similar to networks that follow a power-law distribution, we must add edges to the graph such that the average number of edges at each node is similar to that in Equation ~\ref{cc}. Therefore, averaging over all vertices, $|E|\geq c |V| (\ln |V|)^2$, and $|V| \log |V| = \mathcal{O}(|E|)$.
\end{proof}

\begin{theorem}
The expected running time of our algorithm in the worst case is $\mathcal{O}(|E|)$.
\end{theorem}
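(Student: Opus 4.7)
The plan is to charge each unit of work performed by Algorithm~\ref{CommDet++} to an edge of $G$, bound the expected per-edge charge via Lemma~\ref{E_runtime}, and then use Lemmas~\ref{diameter}, \ref{min_max_degree}, and~\ref{O_E} together with the small-world assumption to show the resulting sum is $\mathcal{O}(|E|)$.

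First I would carry out per-iteration accounting. In one pass of the outer while loop, Algorithm~\ref{CommDet++} performs $\Theta(\delta(u))$ constant-time operations for each active $u \in Origin$: a hash lookup for $N(u)$ plus one Bernoulli trial and a constant-time append per neighbor. The Alpha Detection step (Algorithm~\ref{alpha_detect}) costs $\mathcal{O}(|V|\log|V|)$, and the $\lambda$ trailing no-propagation iterations cost at most $\mathcal{O}(\lambda\cdot|E|)$. So it suffices to bound the expected cost of the propagation phase itself.

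Next I would bound the per-edge charge. For a fixed active $u$, the attempts to propagate its label to the various $v \in N(u)$ are independent Bernoulli trials with probabilities $P(u,v)=(w_{uv}/\delta^*(u))^{1/4}$, and Lemma~\ref{E_runtime} bounds the expected number of iterations until all of them succeed by $1/\min_{v\in N(u)}P(u,v)\leq(\delta^*(u))^{1/4}$, using the convention $w_{uv}\geq 1$. A vertex $v$ may of course become labeled earlier via a different predecessor, which only shortens the count. Summing over vertices gives
\[
\mathbb{E}(\text{total work}) \;=\; \mathcal{O}\!\left(\sum_{u\in V} \delta(u)\cdot(\delta^*(u))^{1/4}\right).
\]

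Finally I would close the argument by combining the small-world property with Lemmas~\ref{diameter} and~\ref{min_max_degree} to control the $(\delta^*(u))^{1/4}$ factor. Lemma~\ref{diameter} asserts labels cover $G$ in $d$ effective hops, the small-world property gives $d=\mathcal{O}(\log|V|)$, and Lemma~\ref{min_max_degree} couples $|V|$, $d$, and the degree sequence. Under the bounded-weight assumption this lets us bound the aggregate contribution of $(\delta^*(u))^{1/4}$ by at most a $\log|V|$ factor, producing an intermediate bound $\mathcal{O}(|E|\log|V|)$; Lemma~\ref{O_E} then absorbs the logarithm to give the claimed $\mathcal{O}(|E|)$.

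The main obstacle is this last step. Lemma~\ref{min_max_degree} only lower-bounds the maximum degree forced by a small diameter, whereas we actually need an effective upper bound on $\delta^*(u)$ in order to control the sum pointwise. I would handle this by partitioning $V$ into the small alpha set $X$ (few vertices of potentially large weighted degree, bounded via the crude $\delta^*(u)\leq 2|E|$) and its complement (whose weighted degrees are polylogarithmic in $|V|$ under the small-world assumption), and bounding the two contributions separately. The exponent $\beta=1/4$ in Equation~(\ref{p_uv}) is precisely what makes the heavy-tailed contribution from $X$ telescope into the remaining $|E|$ when combined with Lemma~\ref{O_E}.
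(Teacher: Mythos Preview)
Your argument diverges from the paper's in two essential places, and the gap you yourself flag in the last paragraph is real and not closed by the workaround you sketch.

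The paper does not try to control the factor $(\delta^*(u))^{1/4}$ at all. Instead, it organizes the propagation into layers $X_1,\dots,X_d$ (edges between the $(i-1)$-th and $i$-th neighborhoods of the alphas), invokes Lemma~\ref{E_runtime} once per layer to declare $1/\min_{(v,w)\in X_i}P(v,w)$ a constant $c$, and then---crucially---takes the diameter $d$ itself to be a \emph{constant} under the small-world assumption, not $\mathcal{O}(\log|V|)$. With those two constants in hand the total is just $\sum_{i\le d} c\,|X_i|\le d\,c\,|E|=\mathcal{O}(|E|)$. Lemma~\ref{O_E} is used only for the $|V|\log|V|$ sorting cost in alpha detection, not to absorb any logarithm coming from the propagation phase, and Lemma~\ref{min_max_degree} is not invoked in the theorem's proof at all.

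Your route is more honest about where the constants come from, but it does not reach the goal. Bounding the per-vertex expected retries by $(\delta^*(u))^{1/4}$ is fine; the trouble is the sum $\sum_u \delta(u)(\delta^*(u))^{1/4}$. Lemma~\ref{min_max_degree} gives a \emph{lower} bound on the forced maximum degree in a small-diameter graph, which is the wrong direction for what you need, and your proposed split into $X$ and $V\setminus X$ relies on an unproved claim that non-alpha vertices have polylogarithmic weighted degree. Nothing in the paper's model enforces that: edge weights are arbitrary reals $\ge 1$, so $\delta^*(u)$ can be large even when $\delta(u)$ is small, and the $\beta=1/4$ exponent does not make the heavy tail telescope without an additional distributional assumption you have not stated. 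If you want to salvage your approach you would need an explicit hypothesis bounding $\max_u \delta^*(u)$ (or the weight ratio along each edge) by a constant; at that point your argument and the paper's coincide.
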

\begin{proof}
Let $=(V,E)$ be a graph representing a social network. Sorting the vertices for alpha detection takes $|V| log |V|$ time, which by Lemma ~\ref{O_E} is $\mathcal{O}(|E|)$. Set intersection on a sorted list is linear in the size of the list, and hence the entire alpha detection process is $\mathcal{O}(|E|)$.

Cluster assignments are only done in the largest connected component of $G$. The worst case is serial execution on each neighborhood, and not reaching a steady state till $G$ is covered. At each step $i$, the $i^{th}$ neighborhood is traversed serially. For a starting vertex, $u$, for each $i$, let the number of edges between the $(i-1)^{th}$ and the $i^{th}$ neighborhoods of $u$ be $X_i$. In the worst case, each edge $(v,w)$ in $X_i$ requires $\mathbb{E} = \frac{1}{\min P_{vw}, (v,w) \in X_i}$ trials by Lemma ~\ref{E_runtime}, which is a constant for each $i$. Thus, iteration $i$ takes $\mathbb{E} * |X_i|$ time. By Lemma ~\ref{diameter}, $\max i = d$, where $d$ is the diameter of $G$. As $G$ is a social network, it exhibits the small-world effect and $d$ is a small positive constant. Summing over all iterations,
\[T = \sum_{i \in [1,d], i \in \mathbb{Z}}\frac{1}{\min P_{vw}, (v,w) \in X_i} |X_i|\]
\[T = \sum_{i \in [1,d], i \in \mathbb{Z}}c\ |E|\]
\[T = d\ c\ |E|\]
\[T = \mathcal{O}(|E|) \]
\end{proof}

\section{Experiments}
\subsection{Data}
We use 14 datasets to compare our approach to MCL. Notably, the Karate Club network \cite{zachary1977information} has been used as a standard test set for most community detection algorithms. The said network is unweighted, undirected - so we convert it to a weighted directed network, by assuming each edge to have a weight of $1$, and assuming each undirected edge stands for two edges, one in each direction. As our algorithm only considers outgoing edges for calculations, loops are automatically ignored. The `correct' split of the network into $2$ communities is known, as the dataset is a representation of an actual split in a University Karate Club. 

Secondly, we use a Facebook interaction network, which is a weighted, directed graph in which each weighted edge represents the number of times the user represented by the origin vertex posts a message to the wall of the one represented by the destination vertex of the edge, in a given time frame. This dataset was made available by \cite{viswanath-2009-activity, konect}.

We use 4 datasets from the Stanford Network Analysis Project (SNAP) \cite{snapnets}, that include ground-truth communities. The DBLP network \cite{yang2015defining} is an undirected unweighted scientific co-authorship network, where authors are connected if they have published at least one paper together. The YouTube social network dataset \cite{yang2015defining} \cite{mislove-2007-socialnetworks} shows user friendships on the service, and is undirected and unweighted. The Wikipedia network of categories \cite{yin2017local} \cite{KlGlKo14} is a directed, unweighted network of hyperlinks on Wikipedia articles that belong to categories with at least 100 articles in them. The categories serve as ground-truth communities. The email EU Core network \cite{leskovec2007graph} \cite{yin2017local} is a network of e-mail exchanges between persons belonging to a large European research institution. This network is directed, unweighted. The DBLP, YouTube and Wikipedia networks are very large networks, with 317080, 1134890 and 1791489 vertices respectively, and 1049866, 2987624 and 28511807 edges respectively, to which our algorithm successfully scales.

The other 8 datasets were made available by Tore Opsahl. These include a Facebook-like social network \cite{opsahl2009clustering}, a Facebook-like forum network \cite{opsahl2013triadic}, Freeman's EIES network \cite{freeman1979networkers}, organizational advice, awareness and consultation networks \cite{cross2004hidden}, a scientific collaboration network (Arxiv) \cite{newman2001structure} and a women's club network \cite{pajek}. All of these are weighted networks of varying sizes, of which 3 are undirected.

\subsection{Evaluation}
Though improvements to community detection based on the Markov Clustering Algorithm have been made, MCL itself remains one of the most widely-used methods with applications ranging from marketing \cite{hong2013database,10.1007/978-3-642-40270-8_10} to biological sciences \cite{udrescu2016clustering, zhang2017therapeutic, alawieh2015rich}. It is also the most similar in nature to our approach - the difference being that MCL simulates information flow using a single random walk to capture community structure, whereas our approach is somewhat like spawning multiple random walks on each alpha node. When ground truth information is available, we compare false positive and negative rates against MCL. For other datasets, we compare plot conductance of clusters against cluster sizes. Conductance $\phi$ of a cluster, $S$ is defined as \[\phi = \frac{c_S}{2m_S + c_S}\] where $c_S = |{(u,v), u \in S, v \not\in S}|$. In other words, $c_S$ is the number of boundary edges of the cluster, and $m_S$ is the number of edges within the cluster. A lower conductance score indicates better clustering quality. We also compare running time with MCL for some of the datasets used.

\section{Results and Discussion}
\subsection{Ground-Truth}
We begin with the Karate Club Network dataset, where the true split over 2 communities is known. MCL consistently misclassifies 2 nodes in this network, and our approach misclassifies 1-2 nodes in most test runs, owing to the randomness-based implementation of edge-traversal probabilities, but also arrives at the correct split in some cases.

For the large datasets with ground-truth, checking all possible pairs of vertices is not feasible due to their size, so we randomly sample equal number of positive and negative cluster assignments and compare it with the ground truth till the overall false positive and false negative rates converge. The results presented for our approach are for values of $k$ which were arrived at experimentally, or by making informed guesses. Additionally, the ground truth cluster assignments are overlapping, so we assume an assignment for vertices $u$ and $v$ is correct if they are assigned the same cluster in our results, and are in the same cluster in any of the possible clusterings in the ground truth. Figure ~\ref{GroundTruth} presents these results.

\begin{figure}[H]
	%\vskip -2em
	\centering
	\includegraphics[bb=0 0 640 452, scale=0.4]{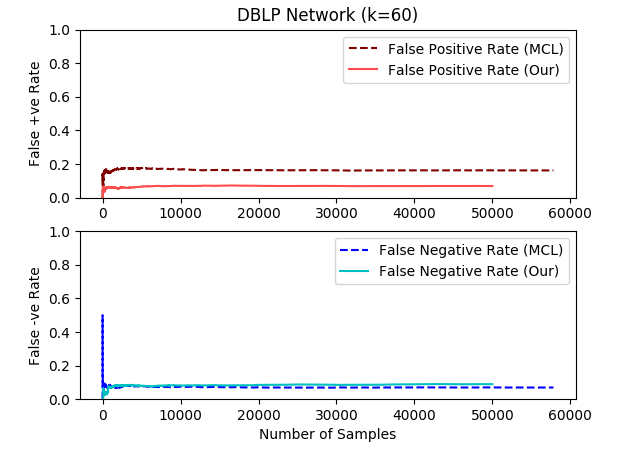}

	\includegraphics[bb=0 0 640 452, scale=0.4]{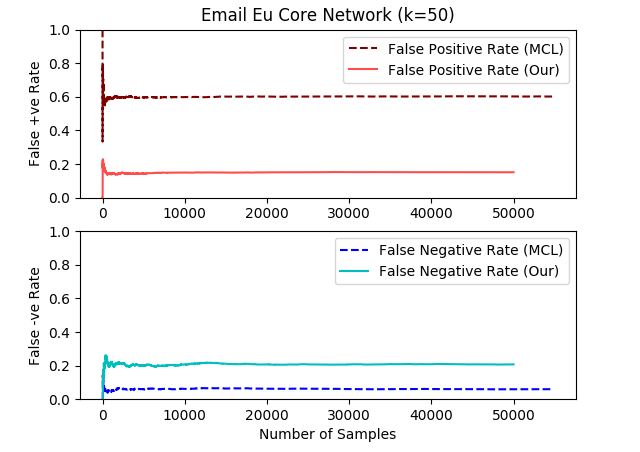}

	\includegraphics[bb=0 0 640 433, scale=0.4]{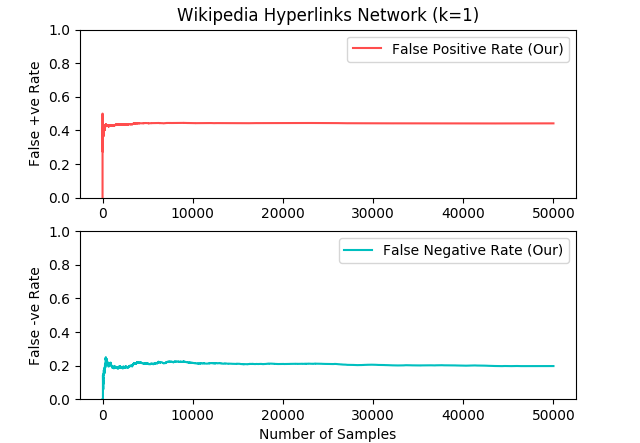}
    \phantomcaption
\end{figure}
\begin{figure}
	\ContinuedFloat
	%\vskip 1em
	\includegraphics[bb=0 0 640 452, scale=0.4]{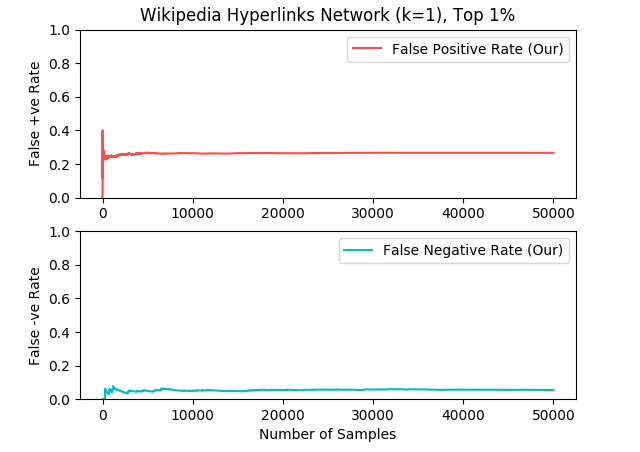}
	\vskip 1em
	\includegraphics[bb=0 0 640 452, scale=0.4]{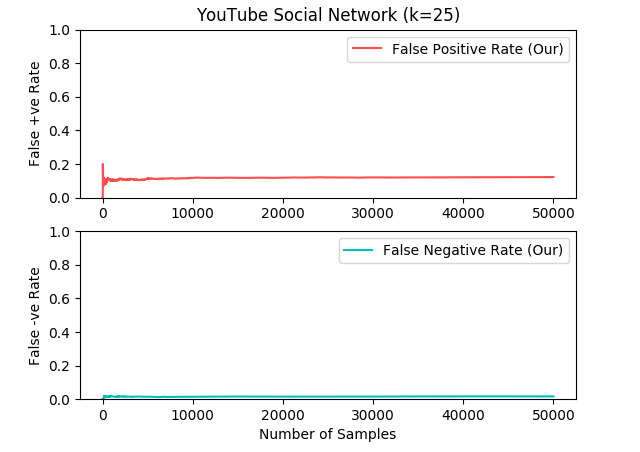}
	\caption{Ground Truth Evaluations}
    \label{GroundTruth}
\end{figure}

We observe that our approach does significantly better in terms of false positive rates as compared to MCL, whereas it performs slightly worse on false negative rates for both the DBLP and Email Eu Core datasets. The difference in false negative rates on the DBLP dataset is negligible. The difference in false positive rates on the Email Eu Core network dataset is stark - the MCL converges to a false positive rate of about $0.6$, whereas our rate converges to about $0.18$. MCL is unable to scale to the Wikipedia and YouTube datasets, and hence we present absolute results. In the Wikipedia network, our false positive rate initially converges to about $0.475$. This lapse in performance, however, is attributed to the quality ground-truth communities provided. The two largest ground truth communities are the categories of Wikipedia articles ``Living People'' and ``Year of Birth Missing (Living People)''. In both these categories, two randomly chosen articles are not likely to have a link to each other in the hyperlink-network. To compensate for this, we remove these two categories from the ground truth data. We then bias the positive sampling towards nodes from the top 1\% largest clusters, and the negative samples towards nodes that either lie in the smallest clusters or that do not lie in clusters. With this bias, our false positive rate falls to about $0.25$ and false negative rate falls to less than $0.1$. These biased rates are not necessarily indicative of global average - they only reflect good performance on detecting the most well-connected clusters.

The results on the YouTube dataset are particularly interesting, and indicative of our algorithm's success. Historically, the YouTube dataset has been extremely tricky for community detection, because its communities do not exhibit behavior as quantified by standard metrics. Ten community detection algorithms published after 2010 have been compared in \cite{harenberg2014community}, none of which perform well on the YouTube dataset. Our approach however, is able to find the underlying community structure, which is very close to the ground truth. The false positive rate converges to about $0.15$ and the false negative rate converges to less than $0.02$ with an unbiased sample.

\subsection{Conductance}
Due to space constraints, we provide the conductance profiles for only certain datasets, but a similar trend is seen over all 14 datasets used - for high value of $k$, the conductance over our assignments are closer to the results from MCL, whereas for lower values of $k$, our algorithm finds a larger number of large clusters, with fairly low conductance. Figure ~\ref{conductance} presents the conductance profiles by cluster sizes.
\begin{figure}[H]
	\vskip -1em
	\includegraphics[bb=0 0 600 600, scale=0.4]{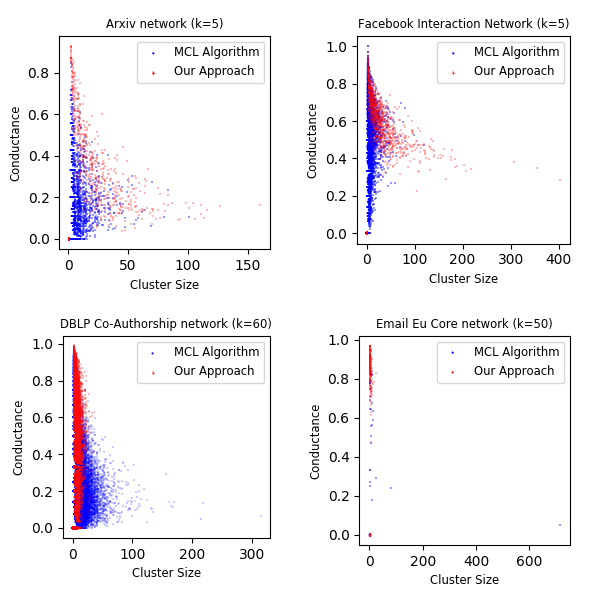}
	\setlength{\abovecaptionskip}{-1em}
	\setlength{\belowcaptionskip}{-1em}
	\caption{Conductance Profiles}
	\label{conductance}
\end{figure}

Typically, lower conductance scores have been associated with better clustering quality in networks. Our ground-truth experiments show that our approach is superior to MCL on the DBLP and Email Eu Core networks. Yet, the conductance profiles indicate that our conductance scores are in fact worse than those obtained by MCL. This shows that conductance in isolation is not a good metric to quantify community structure. 

\subsection{Runtime}
The following table compares the running time for 9 out of the 14 datasets used. The other 5 datasets are very small, so comparing runtimes may not be meaningful. All runtimes were tested on a machine with an Intel i7-5500U and 8GB DDR3 RAM, running Ubuntu 17.04. For MCL, we use the author's original implementation, provided as a compiled executable. Our code is in Python 3, and further speedups are expected if the algorithm is implemented in a lower level language.

\begin{table}[htb]
	\caption{Running times for our approach vs. MCL}
	\begin{tabular}{|c|c|c|c|}
		\hline
		Dataset & No. of Edges & MCL & Our Approach\\
		\hline
		EIES network 		&	830		&	0.009s			&	0.081s	\\
		Fb-like Social 		&	20296	&	0.924s			&	0.234s	\\
		Email				&	25571	&	0.810s			&	0.190s	\\
		Arxiv				&	95188	&	1.957s			&	1.102 s	\\
		Fb-like forum 		&	142760	&	0.949s			&	0.492s	\\
		FB-Interaction		&	253831	&	20.643s			&	4.903s	\\
		DBLP				&	1.05 M	&	8m 33s			&	6m 28s	\\
		YouTube				&	5.98 M	&	Does not scale	&	1hr 7min	\\
		Wikipedia			&	28.5 M	&	Does not scale	&	3hr 4min	\\
		\hline
	\end{tabular}
	\label{prob_table1}
\end{table}

These results indicate that the running time for our algorithm is almost linear in the number of edges, which is in accordance with our derived upper bound as per Theorem ~\ref{O_E}. This makes our algorithm one of the fastest algorithms for community detection. In the near future, we aim to further improve runtimes by optimizing parallelization.

\section{Limitations and Future Work}
A major drawback of our approach is that it requires the parameter $k$ to be decided by the user. Intelligent guesses can however be made, given the nature of the data at hand. In a co-authorship network, for instance, irrespective of the total number of users, it is highly likely that most users in the network have only collaborated with a small subset of the users. Thus, for such detecting a large number of small communities, we would use a larger value of $k$. 

On the other hand, when ground truth information is not available, this parameter may also be used to arrive at different possible clusterings, depending on the task at hand. Depending on the distribution over cluster sizes desired, one can use appropriate values of $k$ to produce such clusterings. 

Future extensions to this work will potentially include real-time models of dynamic networks, models for heterogeneous and multi-layered networks, and their applications.

\begin{acks}
	The authors are grateful to Dr. Sudeepto Bhattacharya (Associate Professor, Department of Mathematics, Shiv Nadar University), discussions with whom planted the seeds of this work.
\end{acks}

\bibliographystyle{ACM-Reference-Format}
\bibliography{biblio} 

\end{document}